\documentclass[sigconf]{acmart}

\AtBeginDocument{%
  }

\copyrightyear{2026}
\acmYear{2026}
\setcopyright{cc}
\setcctype{by}
\acmConference[SIGIR '26]{Proceedings of the 49th International ACM SIGIR Conference on Research and Development in Information Retrieval}{July 20--24, 2026}{Melbourne, VIC, Australia}
\acmBooktitle{Proceedings of the 49th International ACM SIGIR Conference on Research and Development in Information Retrieval (SIGIR '26), July 20--24, 2026, Melbourne, VIC, Australia}
\acmDOI{10.1145/3805712.3809597}
\acmISBN{979-8-4007-2599-9/2026/07}

\usepackage{multirow}
\usepackage{balance}

\begin{document}

\title{Post-hoc Provider Fairness Adaptation via Hierarchical Exposure Alignment}

\author{Jingzhi Li}
\affiliation{%
  \institution{The School of Computer Science and Information Engineering, Hefei University of Technology,}
  \city{Hefei, Anhui}
  \country{China}}
\email{jingzhili826@gmail.com}

\author{Zhiyong Cheng}
\authornote{Corresponding author.}
\affiliation{%
  \institution{The School of Computer Science and Information Engineering, Hefei University of Technology}
  \city{Hefei, Anhui}
  \country{China}}
\email{jason.zy.cheng@gmail.com}

\author{Richang Hong}
\authornotemark[1]
\affiliation{%
  \institution{The School of Computer Science and Information Engineering, Hefei University of Technology}
  \city{Hefei, Anhui}
  \country{China}}
\email{hongrc.hfut@gmail.com}

\author{Meng Wang}
\affiliation{%
  \institution{The School of Computer Science and Information Engineering, Hefei University of Technology}
  \city{Hefei, Anhui}
  \country{China}}
\email{eric.mengwang@gmail.com}
\renewcommand{\shortauthors}{Jingzhi Li, Zhiyong Cheng, Richang Hong, and Meng Wang}
\begin{abstract}
Provider exposure fairness is crucial for sustaining a healthy content ecosystem and preventing monopolization in recommender systems. Yet, most existing methods either \textit{incorporate fairness constraints during model training}, requiring expensive retraining when fairness objectives change, or rely on \textit{post-hoc reranking with fixed criteria}, which lacks adaptability to diverse fairness requirements. To overcome these limitations, we propose \textbf{Post-hoc Fairness Adaptation} (PFA), a lightweight framework that equips a frozen recommender with a fairness adapter, enabling flexible fairness control without retraining the backbone model. Specifically, the fairness adapter learns personalized additive score adjustments from user–item embeddings, which are injected into the original ranking scores to steer provider exposure toward fairness. To train the adapter, we minimize the Kullback-Leibler (KL) divergence between the actual and the target fair exposure distributions. However, this global objective implicitly treats all providers equally, ignoring structural disparities such as imbalanced provider group sizes and heterogeneous exposure within groups. Consequently, fairness may appear satisfied at an aggregate level while severe inter-group and intra-group exposure imbalances persist, undermining practical fairness. To address this, we design \textbf{Hierarchical Exposure Fairness Alignment} (HEFA), which explicitly balances inter- and intra-group provider exposure disparities, enabling flexible adaptation to diverse fairness requirements. To mitigate potential accuracy degradation, PFA jointly optimizes HEFA with a differentiable NDCG loss, enabling end-to-end fairness optimization while preserving ranking quality. Extensive experiments on three public datasets demonstrate that PFA achieves substantial fairness gains with negligible accuracy loss, consistently outperforming strong baselines. Code is available at \url{https://github.com/Tam-JQK/Post-train}. 
\end{abstract}
\begin{CCSXML}
<ccs2012>
   <concept>
       <concept_id>10002951.10003317.10003347.10003350</concept_id>
       <concept_desc>Information systems~Recommender systems</concept_desc>
       <concept_significance>500</concept_significance>
       </concept>
 </ccs2012>
\end{CCSXML}

\ccsdesc[500]{Information systems~Recommender systems}
\keywords{Provider Exposure Fairness, Post-hoc Fairness Adaptation, Hierarchical Exposure Alignment, Fairness-aware Recommendation}
\maketitle

\section{Introduction}
Recommender systems play a central role in connecting users with relevant content on modern online platforms~\cite{Survey1,Survey2,new7,CARD}. While predictive accuracy has long been the primary optimization objective, optimizing solely for accuracy often leads to exposure concentration on a small subset of head providers~\cite{new2,new3,new4}. Such skewed traffic allocation reinforces a "\textit{winner-takes-all}" dynamic, suppressing the visibility of long-tail providers and eroding ecosystem diversity, which in turn harms long-term user engagement~\cite{Survey4,Survey5,sssurney}. Consequently, ensuring fair provider exposure is no longer optional but a core design principle for sustainable recommendation ecosystems.

Existing approaches to provider exposure fairness fall broadly into two categories, as illustrated in Figure~\ref{fig:motivation}. \textit{In-processing} methods~\cite{FairNeg,Multi-FR,FairDual} integrate fairness constraints directly into model training through techniques such as resampling~\cite{FairNeg}, reweighting~\cite{FairDual}, or fairness-aware regularization~\cite{Multi-FR}. While these methods can jointly optimize accuracy and fairness, they tightly couple fairness objectives with model parameters, making costly retraining inevitable when fairness requirements change. \textit{Post-processing} methods~\cite{FairRec,p-mmf,CPFair,TFROM} instead rerank the outputs of a pretrained recommender, making them appealing for their model-agnostic nature. Representative approaches include fair allocation mechanisms~\cite{FairRec}, combinatorial optimization formulations~\cite{CPFair}, and constrained reranking strategies~\cite{p-mmf}. However, these methods rely on fixed fairness criteria embedded in reranking algorithms, limiting their adaptability to diverse and evolving fairness requirements. These limitations highlight the need for an adaptive framework that decouples fairness optimization from base recommender training. Recent advances in parameter-efficient fine-tuning, particularly adapter-based methods~\cite{fine-tuning1,fine-tuning3,fine-tuning2}, show that lightweight modules can effectively adapt pretrained models to new objectives without modifying backbone parameters. Inspired by this, we ask: \textit{Can we achieve flexible fairness adaptation for pretrained recommenders through lightweight, post-hoc modules, without resorting to full retraining or rigid reranking strategies?}
\begin{figure}[htbp]
    \centering
    \includegraphics[width=\columnwidth]{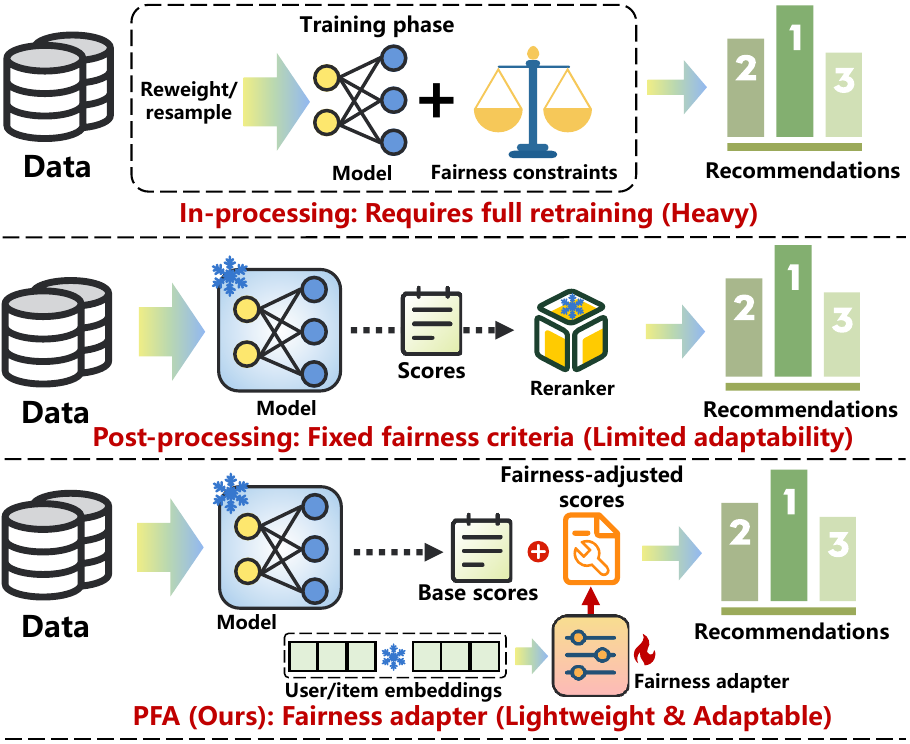}
    \caption{Comparison of in-processing, post-processing, and PFA (ours) for provider exposure fairness. PFA achieves fairness optimization by training a lightweight fairness adapter.}
    \label{fig:motivation}
\end{figure}

To address this question, we propose PFA, a framework that equips a frozen pretrained recommender with a lightweight fairness adapter. The adapter operates on user–item embeddings to learn personalized additive score adjustments, which are injected into the original ranking scores to steer recommendations toward fairer exposure distributions. Under this design, a natural training objective is to align the achieved provider exposure distribution with a target fair distribution by minimizing the KL divergence. However, this global alignment objective implicitly treats all providers equally, overlooking structural disparities such as imbalanced provider group sizes and heterogeneous exposure within groups. Consequently, fairness may appear satisfied at an aggregate level while substantial inter-group and intra-group exposure imbalances persist, limiting practical effectiveness. To overcome this, we propose HEFA, a flexible fairness objective that hierarchically models provider exposure. By decomposing fairness optimization into complementary inter-group and intra-group components, HEFA enables independent control of exposure balance at different granularities, thereby allowing adaptation to diverse fairness requirements. Since fairness-oriented score corrections may degrade ranking accuracy, PFA jointly optimizes the HEFA loss with a differentiable NDCG loss to preserve recommendation quality. To support end-to-end optimization, we further adopt the differentiable sorting network~\cite{DSN} to approximate the non-differentiable ranking and exposure computation operations. 

Our main contributions are summarized as follows:
\begin{itemize}
\item We propose \textbf{Post-hoc Fairness Adaptation} (PFA), a lightweight framework that improves provider exposure fairness via a trainable adapter while keeping the base recommender frozen, thereby avoiding costly retraining of the backbone.
\item We design \textbf{Hierarchical Exposure Fairness Alignment} (HEFA), a flexible fairness objective that decomposes exposure fairness into inter-group and intra-group components, enabling fine-grained and independent control of fairness optimization at different granularities.
\item Extensive experiments on three public datasets demonstrate that PFA consistently improves provider exposure fairness while maintaining competitive recommendation accuracy.
\end{itemize}

\section{Related Work}
\label{Related_work}
Recommender systems have become a core infrastructure of online platforms, and algorithmic fairness has emerged as an important consideration in recent recommender systems research~\cite{consumer1,provider1,fairnessdef,new1}. Fairness in recommender systems primarily concerns two stakeholder groups: consumers and providers. \textit{Consumer-side fairness} typically focuses on reducing performance disparities across user groups, for example, by enforcing individual-level consistency or ensuring equitable utility across groups defined by protected attributes~\cite{userfairness1,userfairness2,userfairness3}. In contrast, \textit{provider-side fairness} addresses systematic exposure imbalances, where a small subset of providers receives the majority of recommendation traffic while others remain persistently under-exposed~\cite{new6,new5,new8,new9}. 

Existing approaches to provider exposure fairness can be broadly categorized into \textit{in-processing} and \textit{post-processing} approaches, depending on how fairness constraints are integrated into the recommendation pipeline. We review these two lines of work below.

\textit{In-processing approaches} integrate fairness constraints directly into model training through techniques such as reweighting~\cite{FairDual,Ada2Fair,cai1} or fairness-aware regularization~\cite{Multi-FR}. For example, Ada2Fair~\cite{Ada2Fair} employs an adaptive weight generator to adjust the optimization scale of interaction samples, boosting the exposure of niche providers during training. FairDual~\cite{FairDual} applies dual-mirror gradient descent to dynamically compute sample weights, explicitly targeting support for worst-off groups. Multi-FR~\cite{Multi-FR} incorporates fairness regularizers into the learning objective and employs multi-gradient descent to improve exposure for unpopular items. While effective, these approaches tightly couple fairness objectives with model parameters, making full retraining inevitable when fairness requirements change. 

\textit{Post-processing approaches} operate on the outputs of pretrained recommenders and adjust recommendation lists to improve provider exposure fairness. Typical techniques include reranking strategies and exposure allocation mechanisms. For example, FairRec~\cite{FairRec} formulates exposure allocation using equilibrium-based criteria to guarantee minimum exposure for each provider, and CPFair~\cite{CPFair} casts fair reranking as a knapsack optimization problem. P-MMF~\cite{p-mmf} models exposure as a resource allocation process optimized via momentum-based gradient descent for online recommendation scenarios, and LTP-MMF~\cite{LTP-MMF} extends this framework to long-term fairness under feedback loops. In addition, some post-processing methods target related but distinct notions of fairness at the item level. TaxRank~\cite{TaxRank} adopts a taxation-inspired mechanism optimized via optimal transport, while ElasticRank~\cite{ElasticRank} introduces elasticity to characterize the accuracy–fairness trade-off. Although post-processing methods are model-agnostic and avoid retraining, they typically embed fixed fairness criteria into the reranking process, limiting their adaptability to diverse fairness requirements.

In contrast, we treat provider exposure fairness as a post-hoc adaptation problem. We propose PFA, a lightweight framework that freezes the pretrained recommender and attaches a learnable fairness adapter to produce additive score corrections that steer recommendations toward fairer provider exposure distributions. Different from in-processing methods, PFA decouples fairness optimization from model training, avoiding costly retraining. Unlike conventional post-processing methods, PFA leverages a learnable fairness adapter together with HEFA, a flexible fairness objective, enabling dynamic adaptation to diverse fairness requirements through learning rather than fixed reranking criteria.

\begin{figure*}[!h]
  \centering  
  \includegraphics[width=\textwidth]{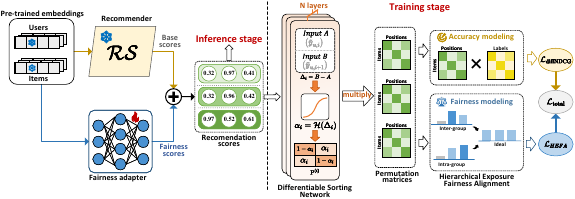}
  \caption{Overview of the proposed PFA framework. The frozen base recommender provides base scores, which are adjusted by a lightweight fairness adapter to produce fairness-aware scores. The differentiable sorting network enables end-to-end optimization of both the HEFA loss and the differentiable NDCG loss.}
  \label{fig:framework}
\end{figure*}

\section{Preliminaries}
\label{sec:preliminaries}
\subsection{Formulation}
\label{sec:formulation}
We consider a general recommender system scenario. Let $\mathcal{U}$, $\mathcal{V}$, and $\mathcal{S}$ denote the sets of users, items, and providers, respectively, with $|\mathcal{U}| = M$, $|\mathcal{V}| = N$, and $|\mathcal{S}| = L$. Historical user–item interactions are represented by a binary matrix $\mathbf{R} \in \{0,1\}^{M \times N}$, where $R_{u,v} = 1$ indicates that user $u$ has interacted with item $v$, and $R_{u,v} = 0$ otherwise. Each item $v$ is associated with a provider $s_v$ (e.g., the seller on an e-commerce platform), and the set of items supplied by provider $s$ is denoted as $\mathcal{V}_s = \{ v \in \mathcal{V} \mid s_v = s \}$. The recommender learns a scoring function $f_\theta$ that predicts user $u$'s preference for item $v$ based on their embeddings $\mathbf{e}_u, \mathbf{e}_v \in \mathbb{R}^d$, yielding a predicted score $\hat{y}_{u,v} = f_\theta(\mathbf{e}_u, \mathbf{e}_v)$. For each user $u$, the system generates a personalized recommendation list $L_u$ by selecting the top-$K$ items ranked according to $\hat{y}_{u,v}$.

\textbf{Provider Exposure.} We now formalize provider exposure. Intuitively, a provider gains exposure when its items appear in recommendation lists. Let $\pi(v \mid L_u)$ denote the exposure assigned to item $v$ in user $u$'s list $L_u$. The total exposure of provider $s$ is defined as the sum of exposure scores received by all its items across all users~\cite{attention1,attention3,Ada2Fair}:
\begin{equation}
\label{eq:exposure}
e_s = \sum_{u \in \mathcal{U}} \sum_{v \in \mathcal{V}_s} \pi(v \mid L_u).
\end{equation}
Since higher-ranked items attract more user attention, we model position bias using a logarithmic decay function following prior works~\cite{attention1,attention2,attention3}. Let $\mathrm{rank}(v, L_u) \in \{1, \dots, K\}$ denote the rank position of item $v$ in $L_u$. The exposure score is defined as:
\begin{equation}
\label{eq:position_bias}
\pi(v \mid L_u) = 
\begin{cases} 
\dfrac{1}{\log_2(1 + \mathrm{rank}(v, L_u))}, & \text{if } v \in L_u, \\ 
0, & \text{otherwise}. 
\end{cases}
\end{equation}

\textbf{Fairness Target.} Our fairness objective aims to regulate provider exposure according to a predefined target distribution. As a default instantiation, following prior work~\cite{attention3,Multi-FR,Target1}, we adopt a uniform target distribution:
\begin{equation}
\label{eq:target}
\mathbf{t} = [t_1, \dots, t_L], \quad \text{where} \quad t_s = \frac{1}{L}, \; \forall s \in \mathcal{S}.
\end{equation}
This target embodies the principle of equal opportunity, ensuring that each provider has an equal chance of receiving exposure. We note that our framework naturally supports alternative targets~\cite{attention3,t1,t2} (e.g., proportional to catalog size) by simply specifying a different distribution $\mathbf{t}$. The choice of target depends on the specific application context and platform objectives.

\subsection{Pretrained Recommender}
\label{sec:pretrained}
Our framework assumes a pretrained recommender that produces user and item embeddings, such as matrix factorization models~\cite{MF} and graph-based recommenders~\cite{LightGCN,SGL}. The backbone model is pretrained using the Bayesian Personalized Ranking (BPR) loss~\cite{BPR}, which encourages observed interactions to be ranked higher than unobserved ones. Let $\mathcal{D} = \{(u, v^+, v^-)\}$ denote the training set, where $v^+$ is an interacted item and $v^-$ is a sampled negative item for user $u$. The BPR loss is defined as:
\begin{equation}
\mathcal{L}_{\text{BPR}} = -\sum_{(u, v^+, v^-) \in \mathcal{D}} 
\ln \sigma\bigl(\hat{y}_{u,v^+} - \hat{y}_{u,v^-}\bigr),
\end{equation}
where $\sigma(\cdot)$ is the sigmoid function. After pretraining, we obtain user and item embeddings $\mathbf{e}_u, \mathbf{e}_v \in \mathbb{R}^d$. These embeddings are frozen during the subsequent fairness adapter training.

\subsection{Differentiable Sorting Network}
\label{sec:dsn}
Differentiable sorting networks~\cite{DSN,MDSN} provide a smooth relaxation of discrete sorting operations, enabling gradient-based optimization. Unlike alternative differentiable sorting methods~\cite{NeuralSort, FDS} that produce unimodal row-stochastic matrices, differentiable sorting networks yield \textbf{doubly stochastic permutation matrices} that faithfully preserve probability distributions. In our framework, we leverage the differentiable sorting networks to compute provider exposure and optimize ranking metrics such as NDCG, both of which depend on item positions in recommendation lists.

A differentiable sorting network approximates sorting through multiple layers of compare-and-swap operations. For a user $u$, let $\tilde{\mathbf{y}}_u \in \mathbb{R}^N$ denote the predicted scores for all items. The network iteratively compares adjacent element pairs and conditionally swaps them to progressively approximate a sorted sequence. Taking the odd-even transposition network~\cite{odd} as an example, odd layers compare pairs at positions $(1,2), (3,4), \dots$, while even layers compare $(2,3), (4,5), \dots$. For each comparison, a swap operation is performed to enforce descending order. However, standard swap operations rely on non-differentiable $\max$ and $\min$ functions, which we relax using a soft formulation.

For an adjacent pair $(\tilde{y}_{u,i}, \tilde{y}_{u,i+1})$, a hard swap is defined as: $\tilde{y}'_{u,i} = \max(\tilde{y}_{u,i}, \tilde{y}_{u,i+1})$ and $\tilde{y}'_{u,i+1} = \min(\tilde{y}_{u,i}, \tilde{y}_{u,i+1})$. To enable differentiation, we replace $\max$/$\min$ with soft interpolation $\alpha_i \in [0, 1]$:
\begin{equation}
\begin{aligned}
\tilde{y}'_{u,i} &= (1 - \alpha_i) \cdot \tilde{y}_{u,i} + \alpha_i \cdot \tilde{y}_{u,i+1}, \\
\tilde{y}'_{u,i+1} &= \alpha_i \cdot \tilde{y}_{u,i} + (1 - \alpha_i) \cdot \tilde{y}_{u,i+1}.
\end{aligned}
\end{equation}
The interpolation weight is computed as $\alpha_i = \mathcal{H}(\tilde{y}_{u,i+1} - \tilde{y}_{u,i})$, where $\mathcal{H}(x) = \frac{1}{\pi} \arctan(\beta \cdot x) + \frac{1}{2}$ is a Cauchy-based smoothing function~\cite{MDSN,PreferenceAlignment} and $\beta > 0$ controls sharpness. Intuitively, $\alpha_i$ approaches 0 when the pair is already correctly ordered, and approaches 1 when a swap is needed. Larger $\beta$ yields behavior closer to hard sorting, while smaller $\beta$ provides smoother gradients.

Each soft swap can be represented as a $2 \times 2$ doubly stochastic matrix $\mathbf{P}^{(i)} = \bigl[\begin{smallmatrix} 1-\alpha_i & \alpha_i \\ \alpha_i & 1-\alpha_i \end{smallmatrix}\bigr]$. For each differentiable sorting network layer, these local matrices are embedded into an $N \times N$ identity matrix to form a layer-wise permutation matrix $\mathbf{P}_k$. The overall soft permutation matrix is computed by sequentially composing all $K$ layers: $\mathbf{P}_{\mathrm{soft}} = \mathbf{P}_1 \mathbf{P}_2 \cdots \mathbf{P}_K \in \mathbb{R}^{N \times N}$. Each element $[\mathbf{P}_{\mathrm{soft}}]_{jk}$ represents the probability that item $j$ is assigned to rank position $k$, enabling differentiable computation of exposure-based fairness objectives and ranking metrics. We refer readers to~\cite{DSN} for implementation details.

\section{Methodology}
\subsection{Overview}
\label{sec:overview}
As illustrated in Figure~\ref{fig:framework}, the proposed PFA framework comprises two components: a \textit{frozen pretrained recommender} and a \textit{lightweight fairness adapter}. The pretrained recommender produces original user-item preference scores, while the fairness adapter generates additive score corrections to steer recommendations toward fair exposure distributions. During training, we optimize the adapter using the \textit{HEFA objective}, which decomposes fairness optimization into inter-group and intra-group levels, along with a differentiable NDCG loss to preserve recommendation accuracy. A \textit{differentiable sorting network} enables end-to-end optimization of both fairness and ranking objectives. During inference, the final scores, obtained by combining the original recommender outputs with the adapter's corrections, are directly used for recommendation without additional overhead. We elaborate on each component in the following subsections.

\subsection{Fairness Adapter}
\label{sec:adapter}
The fairness adapter is a lightweight trainable module that operates on frozen user and item embeddings to learn additive score corrections for fairness-aware recommendation. As illustrated in Figure~\ref{fig:framework}, the adapter takes embeddings produced by the pretrained backbone as input. For a batch of $B$ users, let $\mathbf{E}_u \in \mathbb{R}^{B \times d}$ denote the user embeddings and $\mathbf{E}_v \in \mathbb{R}^{B \times N \times d}$ denote the embeddings of $N$ candidate items, where $d$ is the embedding dimension. To enable feature interaction, the user embeddings are broadcast along the item dimension to obtain $\tilde{\mathbf{E}}_u \in \mathbb{R}^{B \times N \times d}$, and concatenated with item embeddings:
\begin{equation}
\mathbf{Z} = [\tilde{\mathbf{E}}_u \,\|\, \mathbf{E}_v] \in 
\mathbb{R}^{B \times N \times 2d}.
\end{equation}
The concatenated representation is processed by a three-layer MLP with hidden dimension $h$:
\begin{equation}
\begin{aligned}
\mathbf{H}_1 &= \text{ReLU}(\mathbf{Z} \mathbf{W}_1 + \mathbf{b}_1), \\
\mathbf{H}_2 &= \text{ReLU}(\mathbf{H}_1 \mathbf{W}_2 + \mathbf{b}_2), \\
\mathbf{\Delta} &= \mathbf{H}_2 \mathbf{W}_3 + \mathbf{b}_3,
\end{aligned}
\end{equation}
where $\mathbf{W}_1 \in \mathbb{R}^{2d \times h}$, $\mathbf{W}_2 \in \mathbb{R}^{h \times h}$, $\mathbf{W}_3 \in \mathbb{R}^{h \times 1}$, and $\mathbf{b}_i$ are the corresponding biases. The output $\mathbf{\Delta} \in \mathbb{R}^{B \times N}$ represents the score corrections. The final adjusted scores are obtained by adding these corrections to the base scores $\hat{\mathbf{Y}} \in \mathbb{R}^{B \times N}$ from the pretrained recommender:
\begin{equation}
\tilde{\mathbf{Y}} = \hat{\mathbf{Y}} + \mathbf{\Delta}.
\end{equation}
Equivalently, for each user $u$, we have $\tilde{\mathbf{y}}_u = \hat{\mathbf{y}}_u + \Delta\mathbf{y}_u$, where $\Delta\mathbf{y}_u \in \mathbb{R}^N$ is the corresponding row of $\mathbf{\Delta}$. During inference, the adjusted scores $\tilde{\mathbf{Y}}$ are directly used for ranking without additional overhead. During training, we apply the differentiable sorting network (Section~\ref{sec:dsn}) to the final adjusted scores to obtain soft permutation matrices, enabling end-to-end optimization of both exposure fairness and ranking accuracy.

\subsection{Hierarchical Exposure Fairness Alignment}
\label{sec:hefa}
With the fairness adapter in place, we now design the training objective to optimize provider exposure fairness. A natural approach is to minimize the KL divergence between the provider exposure distribution and a target fair distribution. However, this global alignment implicitly treats all providers equally, overlooking structural disparities such as imbalanced group sizes and heterogeneous exposure within groups. Consequently, fairness may appear satisfied at an aggregate level while substantial inter- and intra-group imbalances persist. To address this limitation, we propose HEFA, which decomposes fairness optimization into inter-group and intra-group components, enabling flexible control at different granularities via tunable weights.

\textbf{Differentiable Exposure Estimation.} We first compute provider exposure scores in a differentiable manner using the soft permutation matrix. Recall from Section~\ref{sec:dsn} that the soft permutation matrix $\mathbf{P}_{\mathrm{soft}}^{(u)} \in \mathbb{R}^{N \times N}$ encodes the probability that item $v$ is ranked at position $k$ via $[\mathbf{P}_{\mathrm{soft}}^{(u)}]_{vk}$. The expected exposure of provider $s$ is defined as: 
\begin{equation}
\hat{e}_s = \sum_{u \in \mathcal{U}} \sum_{v \in \mathcal{V}_s} \sum_{k=1}^{K} [\mathbf{P}_{\mathrm{soft}}^{(u)}]_{vk} \cdot b_k,
\end{equation}
where $b_k = 1/\log_2(1+k)$ is the position bias at rank $k$. This is the differentiable counterpart of $e_s$ in Eq.~\eqref{eq:exposure}: the inner sum computes the expected position-weighted exposure for item $v$, which is then aggregated over all items of provider $s$ and all users. Normalizing yields the provider exposure distribution $\mathbf{p} = [p_s]_{s \in \mathcal{S}}$ with $p_s = \hat{e}_s / \sum_{s'} \hat{e}_{s'}$. 

Let $\mathbf{t} = [t_s]_{s \in \mathcal{S}}$ denote the target fair exposure distributions defined in Section~\ref{sec:formulation}. A natural objective is to minimize the global KL divergence, which measures how the actual exposure distribution deviates from the target:
\begin{equation}
\label{eq:kl_global}
\mathcal{L}_{\mathrm{KL}} = D_{\mathrm{KL}}(\mathbf{p} \| \mathbf{t}) = \sum_{s \in \mathcal{S}} p_s \log \frac{p_s}{t_s}.
\end{equation}

\textbf{Hierarchical Decomposition.} We partition the provider set $\mathcal{S}$ into $C$ disjoint groups $\mathcal{G} = \{G_1, \dots, G_C\}$ based on historical exposure levels (e.g., high-exposure vs. low-exposure providers). For each group $G_c$, let $p^G_c = \sum_{s \in G_c} p_s$ denote its total exposure share. We then define:

\textbf{Inter-group level:}
\begin{itemize}
    \item Achieved distribution: $\mathbf{p}^G = [p^G_1, \dots, p^G_C]$
    \item Target distribution: $\mathbf{t}^G = [t^G_1, \dots, t^G_C]$, the desired group-level exposure shares
    \item Aggregated target: $\bar{t}^G_c = \sum_{s \in G_c} t_s$, the sum of provider-level targets in $G_c$
\end{itemize}

\textbf{Intra-group level:}
\begin{itemize}
    \item Achieved distribution: $\mathbf{p}^{(c)} = [p_s / p^G_c]_{s \in G_c}$, the relative exposure of providers within group $G_c$.
    \item Target distribution: $\mathbf{t}^{(c)} = [t_s / \bar{t}^G_c]_{s \in G_c}$, the relative target shares within group $G_c$.
\end{itemize}

\begin{theorem}[Hierarchical KL Decomposition]
\label{thm:kl_decomp}
The global KL divergence between the provider exposure distribution $\mathbf{p}$ and target distribution $\mathbf{t}$ can be decomposed as:
\begin{equation}
\label{eq:kl_decomp}
D_{\mathrm{KL}}(\mathbf{p} \| \mathbf{t}) = \underbrace{D_{\mathrm{KL}}(\mathbf{p}^G \| \mathbf{t}^G)}_{\textit{inter-group}} + \underbrace{\sum_{c=1}^{C} p^G_c \cdot D_{\mathrm{KL}}(\mathbf{p}^{(c)} \| \mathbf{t}^{(c)})}_{\textit{intra-group}} + \underbrace{\sum_{c=1}^{C} p^G_c \log \frac{t^G_c}{\bar{t}^G_c}}_{\Delta_{\mathrm{calib}}},
\end{equation}
where the inter-group term measures divergence across provider groups, the intra-group term captures within-group disparities, and $\Delta_{\mathrm{calib}}$ is a calibration term accounting for the mismatch between group-level and aggregated provider targets.
\end{theorem}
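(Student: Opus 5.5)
The plan is to prove the identity by a direct algebraic computation: split each provider's log-ratio $\log(p_s/t_s)$ into a group-level part and a within-group part, then regroup the sum over $\mathcal{S}$ according to the partition $\mathcal{G}$. Since the groups $G_1,\dots,G_C$ are disjoint and cover $\mathcal{S}$, we can write $\sum_{s\in\mathcal{S}} = \sum_{c=1}^C \sum_{s\in G_c}$. For each $s\in G_c$ we use the definitions of the intra-group distributions, $p_s = p^G_c\, p^{(c)}_s$ and $t_s = \bar t^G_c\, t^{(c)}_s$. We also multiply and divide by the group-level target $t^G_c$. This gives the pointwise identity
\begin{equation*}
\log\frac{p_s}{t_s} = \log\frac{p^G_c}{t^G_c} + \log\frac{p^{(c)}_s}{t^{(c)}_s} + \log\frac{t^G_c}{\bar t^G_c}, \qquad s\in G_c .
\end{equation*}

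Next we multiply this identity by $p_s = p^G_c p^{(c)}_s$ and sum over $s\in G_c$, then over $c$, treating the three terms separately. In the first and third terms the log factor depends only on $c$. Using $\sum_{s\in G_c} p_s = p^G_c$, they collapse to $\sum_c p^G_c \log(p^G_c/t^G_c) = D_{\mathrm{KL}}(\mathbf{p}^G\|\mathbf{t}^G)$ and to $\sum_c p^G_c\log(t^G_c/\bar t^G_c) = \Delta_{\mathrm{calib}}$, respectively. In the middle term we factor out $p^G_c$, which leaves $\sum_c p^G_c \sum_{s\in G_c} p^{(c)}_s \log(p^{(c)}_s/t^{(c)}_s)$, i.e. the weighted intra-group KL term. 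Adding the three pieces reproduces Eq.~\eqref{eq:kl_decomp}.

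The only real obstacle is well-definedness, since the algebra itself is a chain-rule-type manipulation. Three conditions must hold. First, $p^G_c>0$, so that $\mathbf{p}^{(c)}$ is defined. Second, $\bar t^G_c>0$ and $t^G_c>0$, so that the logarithms are finite; this is automatic for the uniform target in Eq.~\eqref{eq:target}, and we would state it as an assumption for general $\mathbf{t}$. Third, we need a convention for zero terms. We adopt the standard $0\log 0 = 0$. With this convention, a group with $p^G_c=0$ contributes zero to both sides, because all its $p_s$ vanish, so such groups can simply be dropped. Providers with $p_s=0$ likewise contribute zero to every term.

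Finally, we would remark that if the group target is chosen as $t^G_c=\bar t^G_c$, then $\Delta_{\mathrm{calib}}=0$ and $\mathbf{t}^G$ is a genuine distribution. In that case the identity reduces to the classical chain rule for KL divergence. In general $\mathbf{t}^G$ need not be the aggregated target, and $\Delta_{\mathrm{calib}}$ absorbs exactly this discrepancy. For $D_{\mathrm{KL}}(\mathbf{p}^G\|\mathbf{t}^G)$ to be a bona fide divergence, we also note that $\mathbf{t}^G$ should sum to one.
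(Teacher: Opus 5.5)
Your proposal is correct and follows essentially the same route as the paper: factor $p_s = p^G_c\, p^{(c)}_s$ and $t_s = \bar t^G_c\, t^{(c)}_s$, split the log-ratio, regroup the sum by groups, and insert $t^G_c$ to produce $\Delta_{\mathrm{calib}}$. The differences are minor. You insert $t^G_c$ at the pointwise level rather than after collapsing the group sum, and you also state the positivity and $0\log 0$ conventions needed for every term to be well defined, which the paper leaves implicit.
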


\begin{proof}
For each provider $s \in G_c$, we express the exposure and target in terms of inter- and intra-group components. Specifically, since $p^{(c)}_s = p_s / p^G_c$ by definition, we have $p_s = p^G_c \cdot p^{(c)}_s$; similarly, since $t^{(c)}_s = t_s / \bar{t}^G_c$, we obtain $t_s = \bar{t}^G_c \cdot t^{(c)}_s$. Substituting into the log-ratio:
\begin{equation}
\log \frac{p_s}{t_s} = \log \frac{p^G_c \cdot p^{(c)}_s}{\bar{t}^G_c \cdot t^{(c)}_s} = \log \frac{p^G_c}{\bar{t}^G_c} + \log \frac{p^{(c)}_s}{t^{(c)}_s}.
\end{equation}
Substituting into Eq.~\eqref{eq:kl_global}, regrouping by provider groups, and using the fact that $\sum_{s \in G_c} p^{(c)}_s = 1$ yields:
\begin{align}
D_{\mathrm{KL}}(\mathbf{p} \| \mathbf{t}) &= \sum_{c=1}^{C} \sum_{s \in G_c} p^G_c \cdot p^{(c)}_s \left( \log \frac{p^G_c}{\bar{t}^G_c} + \log \frac{p^{(c)}_s}{t^{(c)}_s} \right) \notag \\
&= \sum_{c=1}^{C} p^G_c \log \frac{p^G_c}{\bar{t}^G_c} + \sum_{c=1}^{C} p^G_c \cdot D_{\mathrm{KL}}(\mathbf{p}^{(c)} \| \mathbf{t}^{(c)}).
\end{align}
Introducing the group-level target $\mathbf{t}^G$ into the first term:
\begin{equation}
\sum_{c=1}^{C} p^G_c \log \frac{p^G_c}{\bar{t}^G_c} = \sum_{c=1}^{C} p^G_c \log \frac{p^G_c}{t^G_c} + \sum_{c=1}^{C} p^G_c \log \frac{t^G_c}{\bar{t}^G_c} = D_{\mathrm{KL}}(\mathbf{p}^G \| \mathbf{t}^G) + \Delta_{\mathrm{calib}}.
\end{equation}
Combining all terms yields Eq.~\eqref{eq:kl_decomp}.
\end{proof}

The calibration term $\Delta_{\mathrm{calib}} = \sum_{c=1}^{C} p^G_c \log \frac{t^G_c}{\bar{t}^G_c}$ accounts for the mismatch between the group-level target $t^G_c$ and the aggregated provider-level target $\bar{t}^G_c$. When $t^G_c = \bar{t}^G_c$ for all groups, this term vanishes, and minimizing the sum of the inter-group and intra-group terms exactly recovers the global KL divergence objective. When $t^G_c \neq \bar{t}^G_c$, the decomposition separates unavoidable target misalignment from controllable exposure disparities, yielding a principled relaxation. This allows platforms to enforce group-level fairness objectives while explicitly accounting for cross-level calibration mismatch.

Based on this decomposition, we define the Hierarchical Exposure Fairness Alignment (HEFA) loss:
\begin{equation}
\label{eq:hefa_loss}
\mathcal{L}_{\mathrm{HEFA}} = \lambda_{\mathrm{inter}} \cdot D_{\mathrm{KL}}(\mathbf{p}^G \| \mathbf{t}^G) + \lambda_{\mathrm{intra}} \cdot \sum_{c=1}^{C} p^G_c \cdot D_{\mathrm{KL}}(\mathbf{p}^{(c)} \| \mathbf{t}^{(c)}),
\end{equation}
where $\lambda_{\mathrm{inter}}, \lambda_{\mathrm{intra}} > 0$ control the relative importance of inter-group and intra-group fairness. Setting $\lambda_{\mathrm{inter}} > \lambda_{\mathrm{intra}}$ prioritizes inter-group fairness, while the reverse emphasizes intra-group equity.

\textbf{Advantage.} HEFA offers two-dimensional flexibility absent in direct KL minimization. First, the target distribution $\mathbf{t}^G$ specifies desired inter-group exposure allocation. For instance, setting $t^G_c = 1/C$ enforces \emph{group parity} that compensates historically disadvantaged groups, while $t^G_c \propto |G_c|$ yields size-proportional allocation. Second, the weights $\lambda_{\text{inter}}$ and $\lambda_{\text{intra}}$ enable fine-grained trade-offs between inter-group balance and intra-group uniformity. This design allows HEFA to instantiate diverse fairness policies by specifying different targets and weights, without modifying the optimization framework. HEFA reduces to global KL minimization only when $t^G_c = \bar{t}^G_c$ for all $c$ and $\lambda_{\text{inter}} = \lambda_{\text{intra}}$; otherwise, it provides hierarchical control over fairness that is unattainable with direct KL optimization.

\subsection{Differentiable NDCG Loss}
\label{sec:ndcg}
Since fairness-oriented score corrections may degrade ranking accuracy, we jointly optimize the HEFA loss with a differentiable NDCG loss to preserve recommendation quality. NDCG effectively captures both item relevance and position sensitivity~\cite{diffNDCG1,diffNDCG2}, making it a natural choice for the accuracy objective.

Let $\mathbf{r}_u = [r_{u,1}, \dots, r_{u,N}]^\top$ denote the ground-truth relevance labels for user $u$, where $r_{u,v}$ indicates the relevance of item $v$ to user $u$ derived from user interactions. The standard NDCG@$K$ is defined as:
\begin{equation}
\label{eq:ndcg}
\mathrm{NDCG}@K = \frac{\mathrm{DCG}@K}{\mathrm{IDCG}@K} = 
\frac{1}{\mathrm{IDCG}@K} \sum_{k=1}^{K} \frac{2^{r_{\pi(k)}} - 1}{\log_2(1 + k)},
\end{equation}
where $\pi(k)$ denotes the index of the item ranked at position $k$, and IDCG 
is the ideal DCG computed by ranking items in descending order of relevance:
\begin{equation}
\label{eq:idcg}
\mathrm{IDCG}@K = \sum_{k=1}^{K} \frac{2^{r_{\pi^*(k)}} - 1}{\log_2(1 + k)},
\end{equation}
where $\pi^*(k)$ denotes the index of the item at position $k$ in the ideal ranking. Since IDCG depends solely on the ground-truth labels, it remains constant during training. 

However, NDCG is non-differentiable due to the discrete sorting operation required for computing $\pi(k)$. To address this, we first reformulate NDCG using a permutation matrix. Let $\mathbf{P}_{\mathrm{hard}}^{(u)} \in \{0,1\}^{N \times N}$ denote the hard permutation matrix, where $[\mathbf{P}_{\mathrm{hard}}^{(u)}]_{vk} = 1$ if item $v$ is placed at position $k$, so that the relevance at position $k$ can be written as $r_{\pi(k)} = [(\mathbf{P}_{\mathrm{hard}}^{(u)})^\top \mathbf{r}_u]_k$. As introduced in Section~\ref{sec:dsn}, the differentiable sorting network produces a soft permutation matrix $\mathbf{P}_{\mathrm{soft}}^{(u)}$, where each element $[\mathbf{P}_{\mathrm{soft}}^{(u)}]_{vk}$ represents the probability that item $v$ is assigned to position $k$. Replacing the hard permutation with $\mathbf{P}_{\mathrm{soft}}^{(u)}$, obtained by applying the differentiable sorting network to the adjusted scores $\tilde{\mathbf{y}}_u$, yields the relaxed relevance:
\begin{equation}
\hat{r}_k = [(\mathbf{P}_{\mathrm{soft}}^{(u)})^\top \mathbf{r}_u]_k.
\end{equation}
Substituting into the NDCG formula gives the differentiable version:
\begin{equation}
\mathrm{diffNDCG}(\tilde{\mathbf{y}}_u, \mathbf{r}_u) = 
\frac{1}{\mathrm{IDCG}@K} \sum_{k=1}^{K} 
\frac{2^{\hat{r}_k} - 1}{\log_2(1 + k)}.
\end{equation}
Since higher NDCG indicates better ranking quality, we define the differentiable NDCG loss over all users as:
\begin{equation}
\mathcal{L}_{\mathrm{diffNDCG}} = \frac{1}{|\mathcal{U}|} \sum_{u \in \mathcal{U}} 
\left(1 - \mathrm{diffNDCG}(\tilde{\mathbf{y}}_u, \mathbf{r}_u)\right).
\end{equation}

\subsection{Overall Objective}
\label{sec:objective}
To jointly optimize provider exposure fairness and ranking accuracy, we combine the HEFA loss and the diffNDCG loss into a unified objective:
\begin{equation}
\mathcal{L}_{\mathrm{total}} = \mathcal{L}_{\mathrm{HEFA}} + \lambda_{\mathrm{acc}} 
\cdot \mathcal{L}_{\mathrm{diffNDCG}},
\end{equation}
where $\lambda_{\mathrm{acc}} > 0$ controls the trade-off between fairness and accuracy. A larger value emphasizes ranking quality, while a smaller value prioritizes fairness optimization.

\section{Experiments}
\subsection{Experimental Setup}
\label{Setup}
\subsubsection{Datasets.}
We evaluate our method on three widely-used real-world datasets: Beauty and Movies \& TV from the Amazon Reviews corpus~\footnote{\label{fn:amazon}\url{https://jmcauley.ucsd.edu/data/amazon/}}, and RateBeer~\footnote{\url{https://snap.stanford.edu/data/web-RateBeer.html}}. To assess provider fairness, we identify providers based on item attributes, treating brands as providers in the Amazon datasets and brewers as providers in RateBeer. Following standard practice~\cite{fairdiverse,providerfairness1,wei}, we filter out users and items with fewer than five interactions. For each user, we randomly split interactions into 70\% for training, 10\% for validation, and 20\% for testing. Detailed dataset statistics are reported in Table~\ref{Statistical}.
\begin{table}[htbp]
    \centering
    \caption{Statistics of the datasets used in our experiments.}
    \renewcommand{\arraystretch}{0.9}
    \setlength\tabcolsep{1.0pt}
    \begin{tabular}{c|ccccc}
    \hline
                  & Users & Items & Providers & Interactions & Sparsity \\ \hline
    Beauty        & 11,742 & 6,506  & 196       & 98,404        & 99.87\%  \\
    Movies \& TV & 10,111 & 3,795  & 77        & 139,568       & 99.64\%  \\
    RateBeer      & 12,483 & 48,971 & 1,645      & 2,686,981      & 99.56\%  \\ \hline
    \end{tabular}
    \label{Statistical}
\end{table}

\subsubsection{Evaluation Metrics.}
Following previous work~\cite{Ada2Fair,fairdiverse,Multi-FR}, we evaluate PFA from two aspects: recommendation accuracy and provider exposure fairness. For accuracy, we report NDCG, HR and MRR, where higher values indicate better ranking performance. For fairness, we adopt Gini, Ent., and CV to measure how evenly exposure is distributed across providers:
\begin{itemize}
\item \textbf{Gini Index (Gini):} Gini measures exposure inequality among providers. Lower values indicate a more equal distribution, while higher values imply stronger concentration on a small number of providers~\cite{fairdiverse,Multi-FR}.
\item \textbf{Entropy (Ent.):} Ent. captures the overall uniformity of provider exposure. Higher entropy indicates more evenly distributed exposure and thus better fairness~\cite{fairdiverse,Survey3}.
\item \textbf{Coefficient of Variation (CV):} CV reflects the relative variation of exposure across providers. It is non-negative and equals zero when all providers receive identical exposure. A smaller CV thus indicates more balanced exposure and improved fairness~\cite{Ada2Fair,Survey3}.
\end{itemize}

\begin{table*}[!t]
\centering
\caption{
Performance comparison of PFA and fairness-aware baselines on three datasets using BPRMF as the base model. FairRec and FairDual are marked as N/A on RateBeer due to computational timeout. The best results among fairness-aware methods are highlighted in \textbf{bold}, and the second-best results are \underline{underlined}.}
\scriptsize
\setlength\tabcolsep{2.5pt}
\renewcommand{\arraystretch}{1.0}
\begin{tabular}{cl|cccccc|cccccc|cccccc}
\toprule
\centering
\multirow{3}{*}{\textbf{Type}} &
\multirow{3}{*}{\textbf{Method}} &
\multicolumn{6}{c|}{\textbf{Amazon Beauty}} &
\multicolumn{6}{c|}{\textbf{Amazon Movies \& TV}} &
\multicolumn{6}{c}{\textbf{RateBeer}} \\
\cmidrule(lr){3-8} \cmidrule(lr){9-14} \cmidrule(lr){15-20}
& &
\multicolumn{3}{c}{\textbf{Rec. Acc.}$\uparrow$} &
\multicolumn{3}{c|}{\textbf{Prov. Fair.}} &
\multicolumn{3}{c}{\textbf{Rec. Acc.}$\uparrow$} &
\multicolumn{3}{c|}{\textbf{Prov. Fair.}} &
\multicolumn{3}{c}{\textbf{Rec. Acc.}$\uparrow$} &
\multicolumn{3}{c}{\textbf{Prov. Fair.}} \\
\cmidrule(lr){3-5} \cmidrule(lr){6-8}
\cmidrule(lr){9-11} \cmidrule(lr){12-14}
\cmidrule(lr){15-17} \cmidrule(lr){18-20}
& & NDCG & HR & MRR & Gini$\downarrow$ & Ent.$\uparrow$ & CV$\downarrow$
& NDCG & HR & MRR & Gini$\downarrow$ & Ent.$\uparrow$ & CV$\downarrow$
& NDCG & HR & MRR & Gini$\downarrow$ & Ent.$\uparrow$ & CV$\downarrow$ \\
\midrule
-- & Base model          
& \textbf{0.1000} & \textbf{0.1870} & \textbf{0.0833} & 0.7260 & 6.0759 & 2.0746 
& \textbf{0.1137} & \textbf{0.2500} & \textbf{0.0931} & 0.7727 & 4.5499 & 1.9664
& \textbf{0.4952} & \textbf{0.6122} & \textbf{0.5713} & 0.9424 & 6.9071 & 4.8245 \\
\midrule
\multirow{4}{*}{\textit{In-processing}}
& SDRO [WWW'22]
& 0.0935 & 0.1747 & 0.0779 & \underline{0.6305} & 6.3448 & 2.1748 
& 0.0995 & 0.2067 & 0.0851 & 0.6598 & \underline{5.0969} & 1.5818
& 0.4152 & 0.4778 & 0.4921 & \underline{0.8625} & 7.6257 & 4.7581 \\
& Multi-FR [TOIS'22]    
& 0.0912 & 0.1714 & 0.0773 & 0.6641 & 6.2726 & 2.1927 
& 0.0947 & 0.1606 & 0.0852 & 0.6724 & 4.7715 & 1.8527
& 0.3822 & 0.4358 & 0.4602 & 0.8737 & 7.5815 & 4.7920 \\
& Ada2Fair [RecSys'24] 
& 0.0872 & 0.1627 & 0.0758 & 0.6426 & 6.3221 & 2.2855 
& 0.0902 & 0.1505 & 0.0817 & 0.6526 & 4.8428 & 1.7511
& 0.3751 & 0.4511 & 0.4819 & 0.8805 & 7.4628 & 4.6263 \\
& FairDual [ICLR'25] 
& 0.0934 & 0.1692 & \underline{0.0801} & 0.6457 & 6.3202 & 2.3419 
& 0.0969 & 0.1815 & \underline{0.0908} & 0.6519 & 4.9753 & 1.8198
& \multicolumn{6}{c}{N/A} \\
\midrule
\multirow{5}{*}{\textit{Post-processing}}
& FairRec [WWW'20]    
& 0.0852 & 0.1687 & 0.0611 & 0.6861 & 6.2124 & 2.0970 
& 0.1057 & 0.2132 & 0.0856 & 0.6885 & 4.9364 & 1.7917
& \multicolumn{6}{c}{N/A} \\
& CPFair [SIGIR'22]                     
& 0.0932 & 0.1745 & 0.0731 & 0.6426 & \underline{6.3468} & \textbf{1.5632} 
& 0.0944 & 0.2144 & 0.0758 & \underline{0.6469} & 5.0542 & \underline{1.5100}
& 0.3909 & 0.4856 & 0.4735 & 0.8732 & \textbf{8.0953} & \underline{2.9991} \\
& TaxRank [SIGIR'24]                 
& 0.0855 & 0.1636 & 0.0675 & 0.6542 & 6.3173 & \underline{1.7859}
& 0.0965 & 0.2122 & 0.0729 & 0.6518 & 5.0641 & \textbf{1.4476}
& 0.4087 & 0.4903 & 0.4693 & 0.8639 & 7.6918 & 4.2295 \\
& ElasticRank [SIGIR'25]            
& 0.0885 & 0.1715 & 0.0690 & 0.6420 & 5.7372 & 3.3297 
& 0.0929 & 0.1972 & 0.0832 & 0.6931 & 3.7128 & 4.0306
& 0.3624 & 0.4151 & 0.4901 & 0.8775 & 5.6857 & \textbf{1.0721} \\
& PFA [Ours]      
& \underline{0.0943} & \underline{0.1751} & 0.0780 & \textbf{0.6282} & \textbf{6.3544} & 2.2524 
& \underline{0.1066} & \underline{0.2172} & 0.0863 & \textbf{0.6403} & \textbf{5.1010} & 1.7457
& \underline{0.4284} & \underline{0.4964} & \underline{0.5161} & \textbf{0.8467} & \underline{7.7781} & 4.7565 \\
\bottomrule
\end{tabular}
\label{tab:main_results}
\end{table*}

\subsubsection{Base Recommenders and Baselines.}
We apply PFA to three widely used models, BPRMF~\cite{BPR}, LightGCN~\cite{LightGCN}, and SGL~\cite{SGL}, to test its generalizability. To evaluate its effectiveness in both recommendation accuracy and provider fairness, we compare PFA with eight representative fairness-aware baselines, grouped into two categories: (1) \emph{in-processing} methods: SDRO~\cite{SDRO}, Multi-FR~\cite{Multi-FR}, Ada2Fair~\cite{Ada2Fair}, and FairDual~\cite{FairDual}; and (2) \emph{post-processing} methods: FairRec~\cite{FairRec}, CPFair~\cite{CPFair}, TaxRank~\cite{TaxRank}, and ElasticRank~\cite{ElasticRank}.

\subsubsection{Implementation Details.} 
For all backbone recommenders, we tune the learning rate within $\{10^{-1}, 10^{-2}, 10^{-3}, 10^{-4}, 10^{-5}\}$. For GNN-based models, we additionally search the number of graph convolutional layers in $\{1, 2, 3\}$. All fairness-aware baselines are implemented based on the original code released by the authors, with reference to the FairDiverse toolkit~\cite{fairdiverse}. Hyperparameters are initialized from the original papers and re-tuned on the validation set. We use the Adam optimizer~\cite{adamo} with a mini-batch size of 256 and set the embedding dimension to 32 for all methods unless otherwise specified. For PFA, we set $\lambda_{\mathrm{inter}} = \lambda_{\mathrm{intra}} = 1$ and tune $\lambda_{\mathrm{acc}}$ in $\{10^{-6}, \dots, 10^{-1}\}$. The learning rate of the fairness adapter is tuned within $\{10^{-5}, 10^{-4}, 10^{-3}\}$. For the differentiable sorting network (Section~\ref{sec:dsn}), we set the steepness parameter $\beta = 10.0$~\cite{MDSN,PreferenceAlignment}. Following prior work on provider group analysis~\cite{longterm,Multi-FR}, we rank providers by their training-set exposure counts and partition them into head (top 20\%), mid (middle 60\%), and tail (bottom 20\%) groups. For the group-level target distribution in HEFA, we set $t^G_c = 1/C$ for all groups, enforcing equal exposure allocation across the three provider groups. This serves as a representative policy for fair comparison. Alternative fairness requirements can be supported by specifying different targets without modifying the framework. During evaluation, the recommendation list length is set to $K=20$, and all results are averaged over three independent runs to ensure robustness. We select the checkpoint that achieves the best accuracy-fairness trade-off on the validation set and report results using the corresponding checkpoint. Although multiple fairness metrics are evaluated, we use Gini@$K$ as the primary metric for model selection, with other metrics reported for completeness. All models are implemented in PyTorch and trained on NVIDIA RTX 4080 GPUs.

\subsection{Performance Comparison}
We use BPRMF as the default base model unless otherwise specified. As shown in Table~\ref{tab:main_results}, PFA consistently achieves strong performance in both accuracy and fairness across all three datasets. Among all fairness-aware methods, PFA achieves the highest NDCG on all three datasets, indicating superior ranking quality. Compared with other methods, PFA incurs substantially smaller accuracy degradation relative to the base recommender. This indicates that the differentiable NDCG loss effectively preserves recommendation quality during fairness adaptation. Regarding provider fairness, PFA significantly improves exposure balance. It achieves the lowest Gini on all three datasets, demonstrating its effectiveness in reducing exposure inequality. Additionally, PFA obtains the highest Entropy on Amazon Beauty and Amazon Movies \& TV, and the second highest on RateBeer, reflecting more diversified and equitable exposure distribution. For the CV metric, post-processing methods such as CPFair and TaxRank achieve lower values, as they can directly manipulate exposure allocation through reranking. In contrast, PFA performs end-to-end fairness optimization and supports diverse fairness objectives by simply specifying different target distributions without algorithmic redesign. The improvements stem from HEFA's hierarchical design. Inter-group alignment reallocates exposure from head providers to mid- and tail-level providers, while intra-group alignment further distributes exposure evenly within each group. Together, these mechanisms prevent a small subset of providers from monopolizing exposure and enable fine-grained fairness control at both group and provider levels.
\begin{table}[htbp]
\centering
\caption{Performance comparison on Amazon Beauty with different base models.}
\scriptsize
\setlength\tabcolsep{2.5pt}
\renewcommand{\arraystretch}{0.9}
\begin{tabular}{llcccccc}
\toprule
\multirow{2}{*}{\textbf{Base}} & \multirow{2}{*}{\textbf{Method}} &
\multicolumn{3}{c}{\textbf{Rec. Acc.}$\uparrow$} &
\multicolumn{3}{c}{\textbf{Prov. Fair.}} \\
\cmidrule(lr){3-5} \cmidrule(lr){6-8}
& & NDCG & HR & MRR & Gini$\downarrow$ & Ent.$\uparrow$ & CV$\downarrow$ \\
\midrule
\multirow{10}{*}{{LightGCN}} 
& Base model & \textbf{0.1254} & \textbf{0.2466} & \textbf{0.1015} & 0.7102 & 6.0171 & 2.3468 \\
& Multi-FR & 0.1022 & 0.2128 & 0.0802 & 0.6747 & 6.1059 & 2.2145 \\
& Ada2Fair & 0.0987 & 0.2103 & 0.0797 & 0.6722 & 6.1132 & 2.2235 \\
& SDRO & 0.1012 & 0.2113 & 0.0805 & 0.6624 & 6.1523 & 2.2814 \\
& FairDual & 0.1150 & 0.2377 & 0.0911 & 0.6875 & 6.1582 & 2.3621 \\
& FairRec & 0.0993 & 0.2283 & 0.0683 & 0.6841 & 6.1550 & 2.3566 \\
& CPFair & 0.1111 & 0.2241 & 0.0898 & \underline{0.6529} & \underline{6.2578} & \underline{2.1598} \\
& TaxRank & 0.0982 & 0.2006 & 0.0703 & 0.6732 & 6.1851 & \textbf{2.1371} \\
& ElasticRank & 0.1020 & 0.2154 & 0.0792 & 0.6871 & 5.0856 & 4.4055 \\
& PFA [Ours] & \underline{0.1204} & \underline{0.2305} & \underline{0.0946} & \textbf{0.6428} & \textbf{6.2819} & 2.2733 \\
\midrule
\multirow{10}{*}{{SGL}}
& Base model & \textbf{0.1639} & \textbf{0.2918} & \textbf{0.1315} & 0.6973 & 6.1531 & 2.4373 \\
& Multi-FR & 0.1265 & 0.2752 & 0.1026 & 0.6702 & 6.2315 & 2.3571 \\
& Ada2Fair & 0.1087 & 0.2607 & 0.1011 & 0.6673 & 6.2147 & 2.2855 \\
& SDRO & 0.1225 & 0.2658 & 0.1153 & 0.6627 & 6.2613 & 2.2136 \\
& FairDual & 0.1439 & 0.2808 & 0.1215 & 0.6577 & 6.2431 & 2.2374 \\
& FairRec & 0.1116 & 0.2657 & 0.0947 & 0.6787 & \underline{6.3096} & 2.2882 \\
& CPFair & 0.1344 & 0.2764 & 0.1168 & 0.6562 & 6.2674 & \underline{2.1377} \\
& TaxRank & 0.1058 & 0.2685 & 0.1025 & 0.6548 & 6.2393 & \textbf{2.0003} \\
& ElasticRank & 0.1329 & 0.2795 & 0.1181 & \underline{0.6524} & 6.2789 & 2.2122 \\
& PFA [Ours] & \underline{0.1507} & \underline{0.2813} & \underline{0.1254} & \textbf{0.6386} & \textbf{6.3129} & 2.2089 \\
\bottomrule
\end{tabular}
\label{tab:backbone_results}
\end{table}

To provide an intuitive comparison, Figure~\ref{fig:scatter} illustrates the accuracy-fairness trade-off, where PFA consistently lies closest to the upper-right region across all datasets, indicating a superior balance between ranking quality and fairness. Furthermore, as reported in Table~\ref{tab:backbone_results}, PFA yields additional gains when applied to different backbones (LightGCN and SGL) on Amazon Beauty, highlighting its robustness and model-agnostic adaptability. Overall, the results demonstrate that PFA surpasses existing fairness-aware methods, achieving a more favorable trade-off between recommendation accuracy and provider exposure fairness.
\begin{figure}[htbp]
    \centering
    \includegraphics[width=\columnwidth]{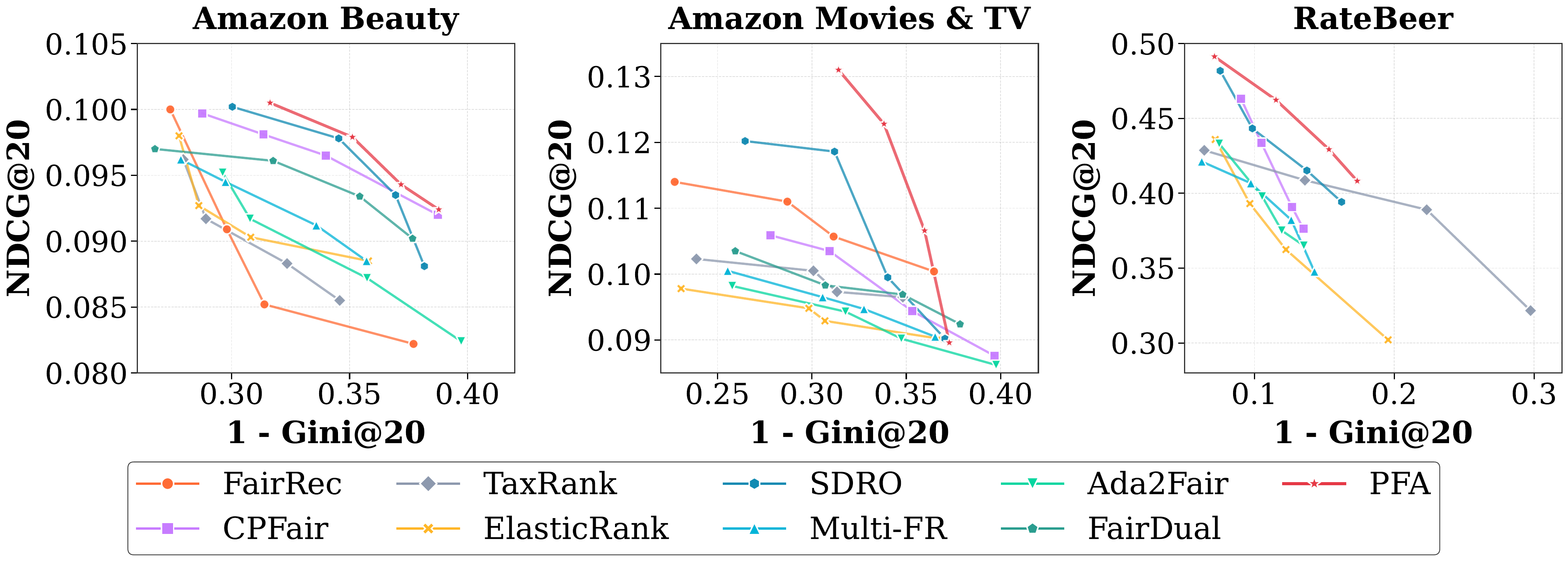}
    \caption{Accuracy-fairness trade-off comparison. Points closer to the upper-right indicate better trade-offs.}
    \label{fig:scatter}
\end{figure}

\subsection{Ablation Study}
We conduct ablation studies to examine the contribution of each component in PFA. As shown in Figure~\ref{fig:ablation}(a), we compare four settings: (1) \textbf{Base}: the pretrained recommender without any fairness intervention, (2) \textbf{$\mathcal{L}_{\text{diffNDCG}}$}: training the fairness adapter with only the differentiable NDCG loss, (3) \textbf{$\mathcal{L}_{\text{HEFA}}$}: training the fairness adapter with only the HEFA loss, and (4) \textbf{PFA}: our complete objective combining both losses. Training with only $\mathcal{L}_{\text{diffNDCG}}$ yields marginal accuracy gains but results in severe exposure imbalance, confirming that accuracy-oriented optimization alone amplifies the "rich-get-richer" effect. Applying $\mathcal{L}_{\text{HEFA}}$ alone substantially improves fairness by significantly reducing Gini, albeit at the cost of recommendation accuracy, as expected. Our full objective combines both components synergistically: $\mathcal{L}_{\text{diffNDCG}}$ serves as a regularizer that recovers accuracy while $\mathcal{L}_{\text{HEFA}}$ maintains fairness gains, achieving the best accuracy-fairness trade-off across all datasets.

Figure~\ref{fig:ablation}(b) further compares our hierarchical exposure fairness alignment ($\mathcal{L}_{\text{HEFA}}$) with direct KL divergence minimization ($\mathcal{L}_{\text{KL}}$). Since both HEFA and KL divergence align exposure distributions toward target distributions, this comparison isolates the benefit of our hierarchical decomposition. We evaluate four fairness objectives: $\mathcal{L}_{\text{KL}}$, $\mathcal{L}_{\text{diffNDCG}} + \mathcal{L}_{\text{KL}}$, $\mathcal{L}_{\text{HEFA}}$, and $\mathcal{L}_{\text{diffNDCG}} + \mathcal{L}_{\text{HEFA}}$ (PFA). The results show that HEFA-based methods consistently outperform KL-based counterparts: $\mathcal{L}_{\text{HEFA}}$ achieves comparable fairness to $\mathcal{L}_{\text{KL}}$ with higher accuracy, and PFA achieves significantly better accuracy than $\mathcal{L}_{\text{diffNDCG}} + \mathcal{L}_{\text{KL}}$ while maintaining similar fairness levels. This improvement stems from HEFA providing more targeted optimization signals by explicitly modeling hierarchical exposure dynamics across and within provider groups, rather than treating all providers homogeneously as in the global KL objective. These results validate the effectiveness of our hierarchical decomposition in achieving superior accuracy-fairness trade-offs.
\begin{figure}[!h]
    \centering
    \includegraphics[width=\columnwidth]{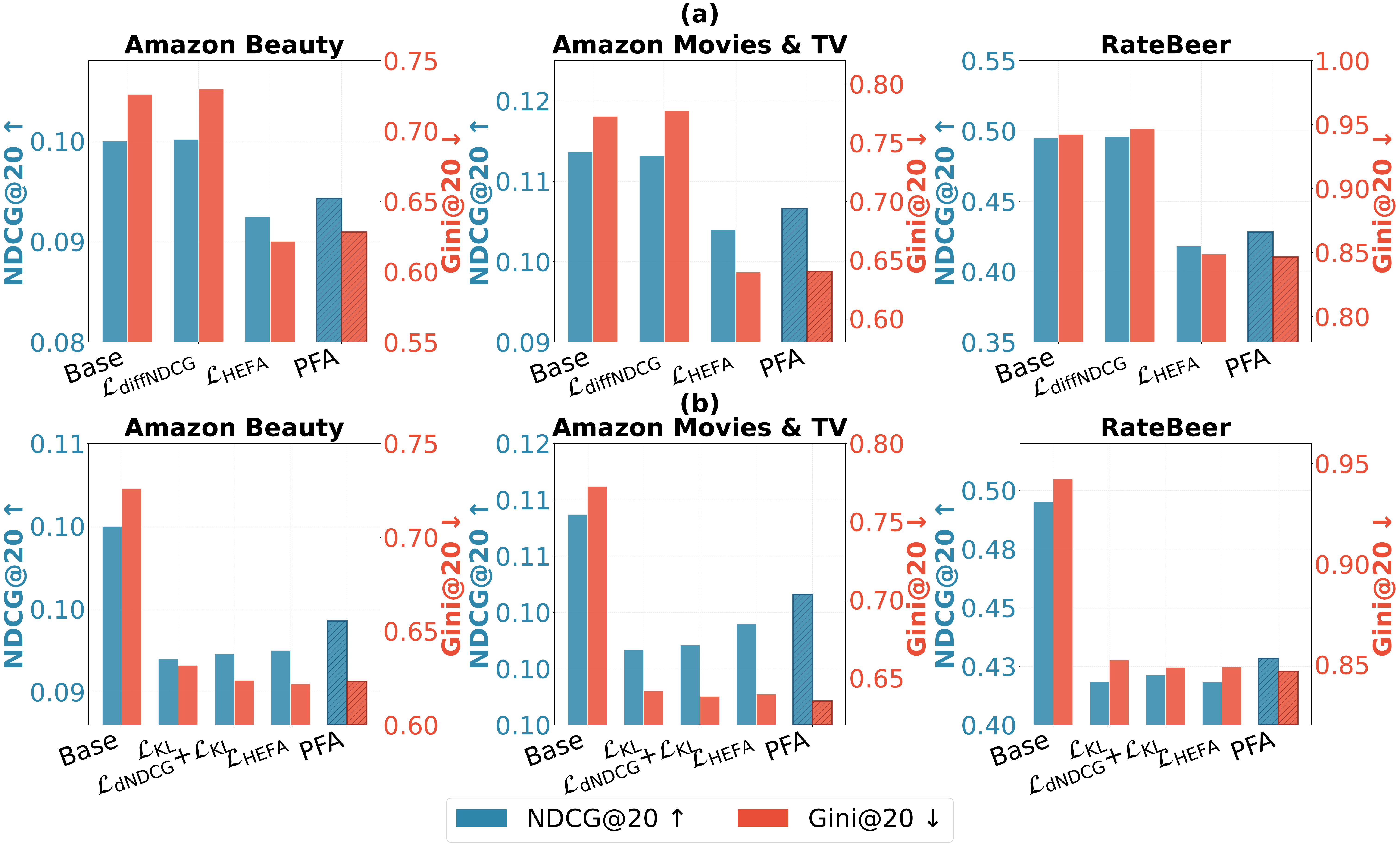}
    \caption{Ablation study across three datasets. (a) Component ablation comparing four settings: Base (no fairness intervention), $\mathcal{L}_{\text{diffNDCG}}$ only, $\mathcal{L}_{\text{HEFA}}$ only, and PFA. (b) Comparison between direct KL divergence minimization and the proposed HEFA.}
    \label{fig:ablation}
\end{figure}

\subsection{Subgroup Analysis}
To validate the effectiveness of HEFA, we conduct a subgroup analysis on providers. Specifically, we partition providers into three groups based on their historical exposure in the training set: head group (top 20\%), mid group (middle 60\%), and tail group (bottom 20\%). Fairness is evaluated from two perspectives: (1) inter-group exposure distribution, i.e., the proportion of total exposure received by each group; and (2) intra-group fairness, i.e., the uniformity of exposure allocation among providers within each group, measured by the within-group Gini. As shown in Figure~\ref{fig:subgroup_analysis} (Top), the base model exhibits severe exposure imbalance: the head group captures the vast majority of exposure, while the tail group receives a marginal share. Existing fairness methods attempt to mitigate this by redistributing exposure, but they exhibit different trade-offs. For instance, CPFair most aggressively suppresses head-group exposure, achieving the largest degree of inter-group redistribution.

\begin{figure}[!h]
    \centering
    \includegraphics[width=\columnwidth]{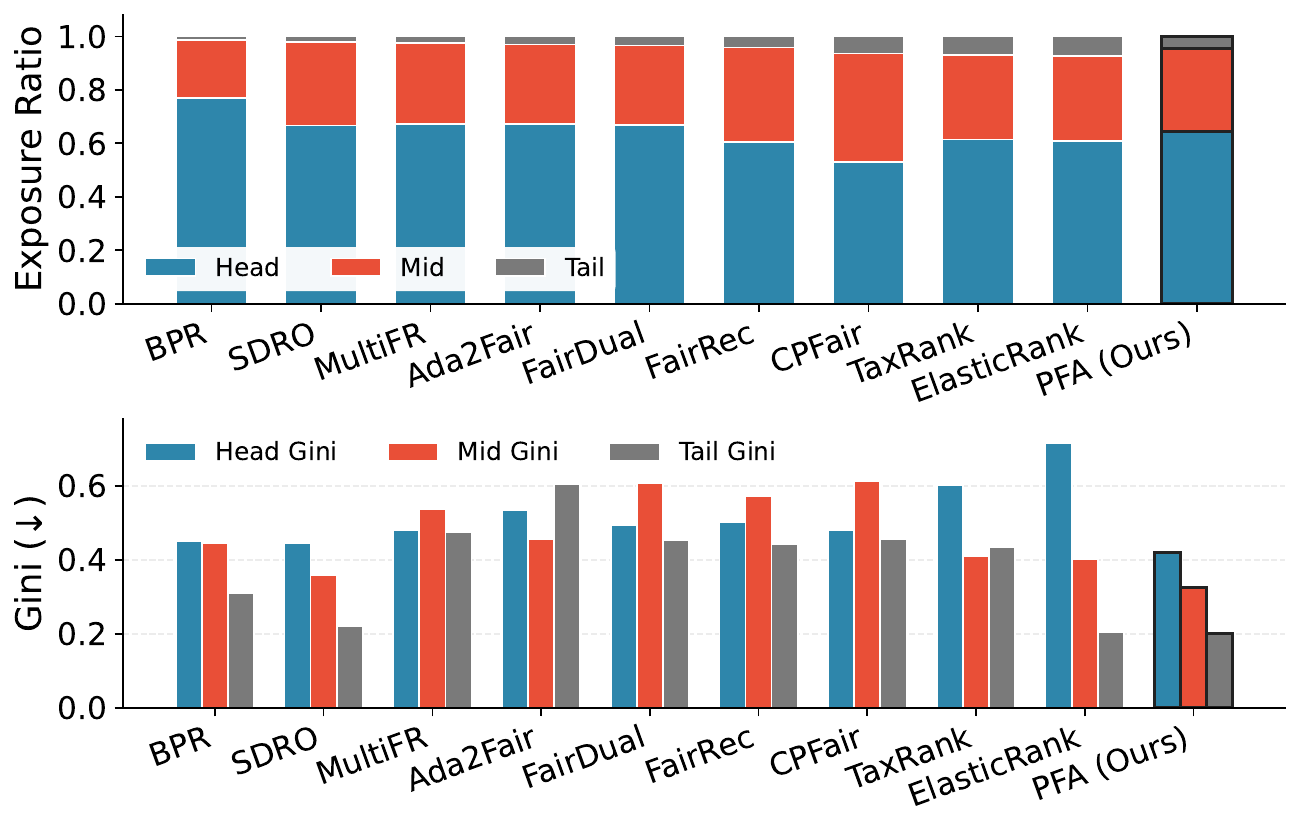}
\caption{Subgroup-level fairness comparison on Amazon Beauty. (Top) Inter-group exposure distribution. (Bottom) Intra-group fairness measured by Gini.}
    \label{fig:subgroup_analysis}
\end{figure}

More critically, as illustrated in Figure~\ref{fig:subgroup_analysis}(Bottom), most existing methods improve inter-group balance while simultaneously worsening intra-group fairness. Taking CPFair as an example, although it increases the mid and tail groups' exposure shares, the within-group Gini for both groups rises considerably. This indicates that the redistributed exposure is concentrated on a few dominant providers rather than evenly distributed, resulting in a "pseudo-fairness" phenomenon. Similar issues are observed in FairRec, FairDual, and Multi-FR. In contrast, PFA achieves genuine hierarchical fairness. While maintaining competitive inter-group distribution, PFA substantially reduces both mid and tail group Gini, achieving the lowest intra-group inequality among all methods. These results demonstrate that PFA addresses both "where to transfer exposure" and "how to distribute it within groups".

\subsection{Parameter Sensitivity Analysis}
\subsubsection{Effect of Accuracy-Fairness Trade-off Parameter}
\label{sec:Trade-off Parameter}
The hyperparameter $\lambda_{\mathrm{acc}}$ controls the weight of the accuracy loss in the overall objective function, determining the trade-off between fairness optimization and ranking quality preservation. We systematically vary it within $\{10^{-6}, 10^{-5}, 10^{-4}, 10^{-3}, 10^{-2}, 10^{-1}\}$ and examine its impact on recommendation accuracy (NDCG) and provider fairness (Gini), as shown in Figure~\ref{fig:lambda_sensitivity}. The effect of $\lambda_{\mathrm{acc}}$ can be characterized into three distinct regimes. In the \textit{fairness-dominant regime} ($\lambda_{\text{acc}} < 10^{-4}$), the fairness loss dominates optimization, achieving the lowest Gini values with moderate accuracy reduction. In the \textit{balanced transition regime} ($\lambda_{\text{acc}} \in \{10^{-4}, 10^{-3}\}$), accuracy improves while fairness degrades gradually, representing the recommended configuration range for favorable trade-offs. In the \textit{saturation regime} ($\lambda_{\text{acc}} \geq 10^{-2}$), both metrics stabilize as the accuracy loss dominates, causing the fairness adapter to approximate an identity mapping. This saturation phenomenon validates that the gradients of $\mathcal{L}_{\text{diffNDCG}}$ are inherently larger in magnitude, suppressing fairness optimization when the weighting coefficient is not properly tuned.
\begin{figure}[htbp]
    \centering
    \includegraphics[width=\columnwidth]{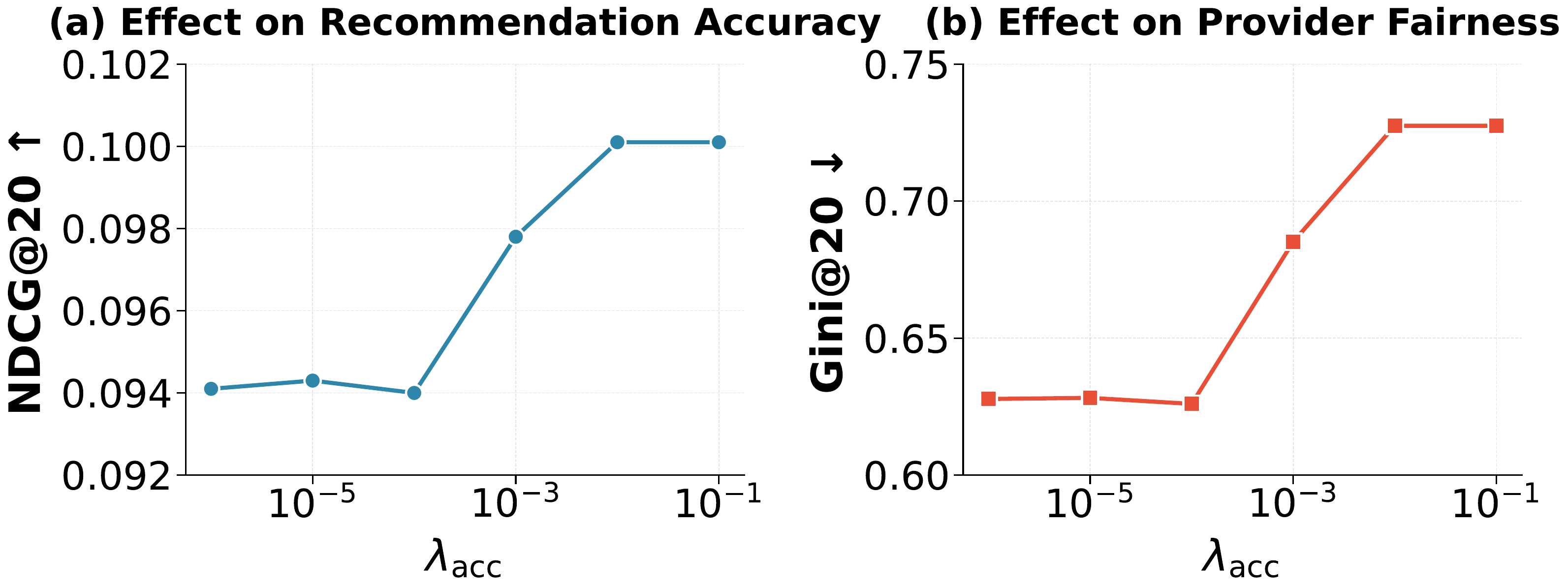}
    \caption{Sensitivity analysis of the accuracy-fairness trade-off parameter $\lambda_{\mathrm{acc}}$ on Amazon Beauty.}
    \label{fig:lambda_sensitivity}
\end{figure}

\subsubsection{Effect of Hierarchical Alignment Weights}
\begin{figure}[htbp]
    \centering
    \includegraphics[width=\columnwidth]{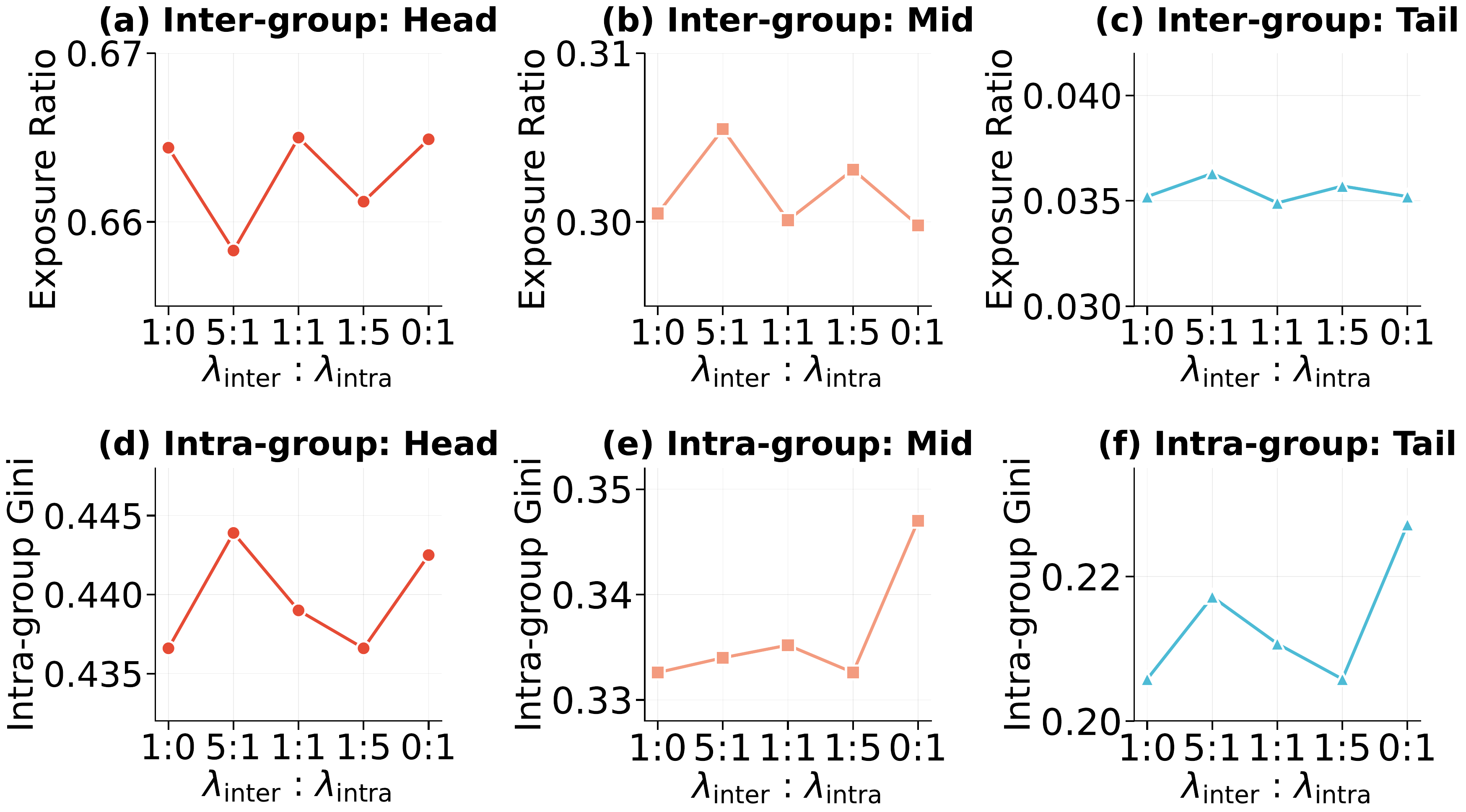}
    \caption{Sensitivity analysis of the HEFA weights $\lambda_{\mathrm{inter}} : \lambda_{\mathrm{intra}}$ on Amazon Beauty.}
    \label{fig:hierarchical_sensitivity}
\end{figure}
To evaluate the controllability of the inter-group and intra-group fairness objectives in HEFA, we fix $\lambda_{\text{acc}} = 10^{-4}$ and systematically vary the ratio of $\lambda_{\text{inter}}$ to $\lambda_{\text{intra}}$, as shown in Figure~\ref{fig:hierarchical_sensitivity}. As shown in Figures~\ref{fig:hierarchical_sensitivity}(a)-(c), increasing $\lambda_{\text{inter}}$ (from 1:1 to 5:1) reduces the Head subgroup's exposure share while increasing those of the Mid and Tail subgroups, demonstrating that $\lambda_{\text{inter}}$ effectively controls inter-group exposure redistribution. Figures~\ref{fig:hierarchical_sensitivity}(d)-(f) reveal that increasing $\lambda_{\text{intra}}$ (from 1:1 to 1:5) reduces the intra-group Gini for all subgroups, confirming that $\lambda_{\text{intra}}$ controls within-group exposure uniformity. Notably, the extreme configuration (0:1) paradoxically increases the Gini for Mid and Tail subgroups, indicating that inter-group constraints are necessary for stable optimization. These results validate that HEFA provides independent control over inter-group and intra-group fairness, allowing platforms to configure weights based on their specific fairness priorities.

\subsection{Effect of Adapter Architecture}
We investigate the impact of adapter architecture by varying the number of MLP layers and hidden dimensions. As shown in Table~\ref{tab:adapter_arch}, the 1-layer adapter achieves high accuracy but poor fairness, indicating insufficient capacity to learn effective score corrections. Increasing to 2 layers substantially improves fairness, while further increasing to 3 layers yields only marginal gains with 50\% more parameters, suggesting diminishing returns. When fixing the number of layers at 2, varying the hidden dimension has minimal impact on final performance but significantly affects convergence speed—the 64-dimensional adapter converges at epoch 19, while the 16-dimensional variant requires 46 epochs. Based on these findings, we adopt the 2-layer MLP with hidden dimension 32 as our default, which balances fairness performance, parameter efficiency, and convergence speed.
\begin{table}[htbp]
\setlength\tabcolsep{2.1pt}
\renewcommand{\arraystretch}{0.9}
\centering
\caption{Effect of adapter architecture on Amazon Beauty. The default configuration is in \textbf{bold}.}
\begin{tabular}{cccccc}
\toprule
Layers & Hidden Dim & NDCG & Gini & Params & Best Epoch \\
\midrule
1 & -- & 0.0996 & 0.6585 & 65 & 50 \\
2 & 32 & 0.0943 & 0.6282 &2,113 & 27 \\
3 & 32 & 0.0940 & 0.6210 & 3,169 & 12 \\
\midrule
2 & 16 & 0.0941 & 0.6284 & 1,057 & 46 \\
2 & 64 & 0.0944 & 0.6292 & 4,225 & 19 \\
\bottomrule
\end{tabular}
\label{tab:adapter_arch}
\end{table}

\subsection{Inference Efficiency Analysis}
We evaluate the computational efficiency of all methods by measuring total inference time on three datasets, as shown in Figure~\ref{fig:inference_time}. In-processing methods exhibit low inference times comparable to the base model, as they introduce no additional overhead during inference. However, these methods require full retraining whenever fairness objectives change, limiting their flexibility. Among post-processing methods, FairRec incurs prohibitively high costs due to its integer linear programming formulation, while CPFair and ElasticRank also scale poorly on larger datasets. As noted earlier, FairRec and FairDual fail to complete on RateBeer due to timeout. In contrast, PFA achieves inference times comparable to in-processing methods across all datasets. Although slightly slower due to the additional forward pass through the fairness adapter, PFA avoids the prohibitive overhead of optimization-based reranking, making it suitable for large-scale deployment.
\begin{figure}[htbp]
    \centering
    \includegraphics[width=\columnwidth]{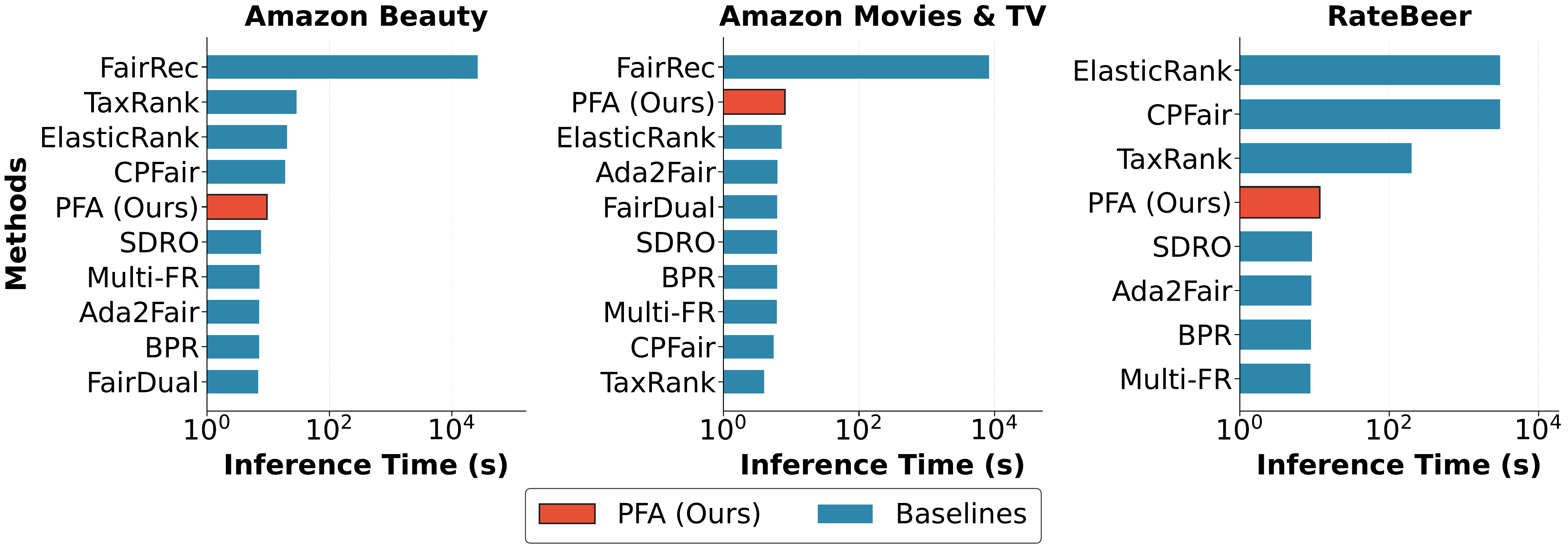}
    \caption{Inference time comparison across three datasets.}
    \label{fig:inference_time}
\end{figure}
\section{Conclusion}
In this paper, we propose Post-hoc Fairness Adaptation (PFA), a lightweight framework that decouples fairness optimization from base recommender training. PFA freezes the pretrained recommender and attaches a lightweight fairness adapter that learns personalized additive score corrections to steer recommendations toward fairer exposure distributions, avoiding retraining the backbone while overcoming the rigidity of fixed reranking strategies. To enable fine-grained fairness control, we design Hierarchical Exposure Fairness Alignment (HEFA), which decomposes fairness optimization into inter-group and intra-group components, allowing flexible adaptation to diverse fairness requirements. By jointly optimizing HEFA with a differentiable NDCG loss and leveraging a differentiable sorting network for end-to-end training, PFA effectively balances the accuracy-fairness trade-off. Extensive experiments on three public datasets demonstrate that PFA consistently outperforms existing baselines, achieving substantial fairness improvements while maintaining competitive recommendation accuracy. Future work includes exploring dynamic grouping strategies that adaptively update provider partitions based on evolving exposure patterns, and extending PFA to cold-start scenarios where new providers lack historical exposure.

\begin{acks}
This research is supported by the National Natural Science Foundation of China (No. 62272254, No. 72188101, No.U23B2031) and the New Cornerstone Science Foundation through the XPLORER PRIZE.
\end{acks}

\balance
\bibliographystyle{ACM-Reference-Format}
\bibliography{sample-base}

\end{document}